\documentclass{revtex4}
\usepackage{amsmath,amsthm}

\textheight 20.0 cm
\topmargin 0 cm
\textwidth 17 cm
\oddsidemargin 0.0 cm 

\usepackage{graphicx}

\usepackage{amsfonts}
\newtheorem{theorem}{Theorem}[section]

\newtheorem{proposition}[theorem]{Proposition}

\theoremstyle{remark}
\newtheorem{remark}[theorem]{Remark}

\theoremstyle{definition}
\newtheorem{definition}[theorem]{Definition}

\theoremstyle{example}

\theoremstyle{notation}

\newcommand{\bra}[1]{\langle#1|}
\newcommand{\ket}[1]{|#1\rangle}

\begin{document}
\draft
\title{Weak mutually unbiased bases}
\author{M. Shalaby, A. Vourdas\\
Department of Computing,\\
University of Bradford, \\
Bradford BD7 1DP, United Kingdom}

\begin{abstract}
Quantum systems with variables in ${\mathbb Z}(d)$ are considered.
The properties of lines in the ${\mathbb Z}(d)\times {\mathbb Z}(d)$ phase space of these systems, are studied.
Weak mutually unbiased bases in these systems are defined as bases for which the 
overlap of any two vectors in two different bases, is equal to $d^{-1/2}$ or alternatively to one of the $d_i^{-1/2},0$
(where $d_i$ is a divisor of $d$ apart from $d,1$).
They are designed for the geometry of the ${\mathbb Z}(d)\times {\mathbb Z}(d)$ phase space, in the sense
that there is a duality between the weak mutually unbiased bases and the maximal lines through the origin. 
In the special case of prime $d$, there are no divisors of $d$ apart from $1,d$ and the weak mutually 
unbiased bases are mutually unbiased bases.
\end{abstract}
{\large PUBLISHED IN: J. Phys. A45, 052001 (2012)}

\maketitle

\section{Introduction}
There is much work on quantum systems, 
where the position and momentum take values 
in the ring ${\mathbb Z}(d)$ (the integers modulo $d$).
Recent reviews have been presented in \cite{1,2,3,4,5}.

An important topic in this context is the mutually unbiased bases 
\cite{m1,m2,m3,m4,m5,m7,m8,m9,m10,m11,m12,m13,m14}. 
For a prime number $d$, the number of mutually unbiased bases is equal to $d+1$.
These results can be extended to
quantum systems where the position and momentum take values in a Galois fields $GF(p^n)$ \cite{Gal1,Gal2} 
(where $p$ is a prime number).
Hamiltonians for the construction of such systems from $n$ component subsystems, each of which is $p$-dimensional, 
have been discussed 
in \cite{V}.
The number of mutually unbiased bases in these systems is equal to $p^n+1$.
There is currently much work on mutually unbiased bases in systems with dimension which is not a power of a 
prime number
(e.g., for the case $d=6$\cite{stefan1,stefan2}).

In mutually unbiased bases the absolute value of the overlap of any two vectors in two different bases is 
$d^{-1/2}$.
Let $d_i$ be the divisors of $d$, apart from $d$ and $1$.
In this paper we introduce the concept of weak mutually unbiased bases, where roughly speaking, 
this absolute value is equal to $d^{-1/2}$ or alternatively to one of the $d_i^{-1/2},0$
(the precise definition is given below).
We show a correspondence (`duality') between the properties of the weak mutually unbiased bases and the 
properties of the maximal lines in the
${\mathbb Z}(d)\times {\mathbb Z}(d)$ phase space of these systems.
This duality shows that our concept of weak mutually unbiased bases is intimately related and motivated 
by the geometrical properties
of the phase space of these systems.
We note here that the term  weak mutually unbiased bases, has also been used in \cite{weak} 
for a different concept.

The ${\mathbb Z}(d)\times {\mathbb Z}(d)$ phase space is a finite geometry \cite{f1,f2,f3}.
Most of the existing work on finite geometries is on near-linear geometries where 
two points belong to at most one line.
In \cite{SV} we have pointed out that  ${\mathbb Z}(d)\times {\mathbb Z}(d)$ violates this axiom and
that it is not a near-linear geometry.
Only in the special case that $d$ is a prime number and ${\mathbb Z}(d)$ is a field, the 
${\mathbb Z}(d)\times {\mathbb Z}(d)$ is a near-linear geometry (the $GF(p^n)\times GF(p^n)$ is also a near-linear geometry).

In section II we study lines in ${\mathbb Z}(d)\times {\mathbb Z}(d)$.
In particular we show how symplectic transformations map the various lines into other lines.
We also discuss a factorization of lines in ${\mathbb Z}(d)\times {\mathbb Z}(d)$ 
in terms of `component lines' in ${\mathbb Z}(d_i)\times {\mathbb Z}(d_i)$
where $d=\prod d_i$ and any $(d_i,d_j)$ are coprime.
This section extends considerably our previous results for lines in ${\mathbb Z}(d)\times {\mathbb Z}(d)$ in ref.\cite{SV}.

In section III we discuss briefly quantum systems with variables in ${\mathbb Z}(d)$ in order to establish the notation.
In section IV we define weak mutually unbiased bases.
For simplicity we consider the case with $d=d_1d_2$, where $d_1$ and $d_2$ are prime numbers 
(where $d_1<d_2$).
In this case the divisors of $d$ (apart from $1,d$) are $d_1, d_2$.
This leads to bases which are tensor products of mutually unbiased bases in the Hilbert spaces
$H_{d_1}$ and $H_{d_2}$, describing systems with variables in
${\mathbb Z}(d_1)$ and ${\mathbb Z}(d_2)$, correspondingly.
We have considered recently such bases in the context of tomography \cite{SV}, 
but here we arrive at these bases starting from the above definition for
weak mutually unbiased bases.
We note that a tensor product of  mutually unbiased bases in two Hilbert spaces has also been used in \cite{K}.
In section V we discuss the duality between 
the weak mutually unbiased bases and the lines in the
${\mathbb Z}(d)\times {\mathbb Z}(d)$ phase space of these systems.
We conclude in section VI with a discussion of our results.

\section{Lines in ${\mathbb Z}(d)\times {\mathbb Z}(d)$}

If $d_i$ is a divisor of $d$, then $(d/d_i){\mathbb Z}(d_i)$ is the subgroup of ${\mathbb Z}(d)$
which consists of the elements $0,d/d_i,2d/d_i,...,[(d_i-1)d/d_i]$.
We factorize $d$ as
\begin{eqnarray}
d=\prod _{i=1}^Np_i^{e_i},
\end{eqnarray} 
where $p_i$ are prime numbers.
The Dedekind function is given by
\begin{eqnarray}
\psi(d)=d\prod _{i=1}^N\left (1+\frac{1}{p_i}\right ).
\end{eqnarray} 
The Jordan totient function $J_2(d)$ is given by:
\begin{eqnarray}
J_2(d)=d^2\prod _{i=1}^N\left (1-\frac{1}{p_i^2}\right )=\psi(d)\varphi (d)
\end{eqnarray} 
Here $\varphi (d)$ is the Euler totient function.
These functions have been used in the context of finite systems in \cite{VB,P}.

A line through the origin in ${\mathbb Z}(d)\times {\mathbb Z}(d)$ is the set
\begin{eqnarray}\label{l}
{\cal L}(\nu,\mu)=\{(\nu \alpha,\mu\alpha)\;|\; \alpha \in {\mathbb Z}(d)\} 
\end{eqnarray}
Mathematically this is a cyclic module generated by $(\nu,\mu)$.
We use a more physical language and we refer to it as a line through the origin.

The matrices 
\begin{eqnarray}\label{SY}
g(\kappa, \lambda |\mu , \nu)\equiv 
\left(\begin{array}{cc}
\kappa& \lambda\\
\mu& \nu\\
\end{array}
\right)
;\;\;\;\;\;\;{\rm det} (g)=\kappa \nu-\lambda \mu=1\;({\rm mod}\; d);\;\;\;\;\;
\kappa, \lambda, \mu, \nu \in {\mathbb Z}(d).
\end{eqnarray}
form the group $Sp(2,{\mathbb Z}(d))$.
The cardinality of this group is $dJ_2(d)$.

Acting with the matrix $g(\kappa, \lambda |\mu , \nu)$ 
on a point $(\rho, \sigma)\in {\mathbb Z}(d)\times {\mathbb Z}(d)$ 
we get the point $(\kappa \rho +\lambda \sigma, \mu \rho+\nu \sigma)$.
Acting with the symplectic matrix $g(\kappa, \lambda |\mu , \nu)$
on all the points of a line  ${\cal L}(\rho, \sigma)$ we get all the points of the line 
${\cal L}(\kappa \rho +\lambda \sigma, \mu \rho+\nu \sigma)$ for which we also use the notation
$g(\kappa, \lambda |\mu , \nu){\cal L}(\rho, \sigma)$.

In \cite{SV} we have proved  various properties for lines in
${\mathbb Z}(d)\times {\mathbb Z}(d)$. 
For completeness below we give these properties together with several new properties, but we prove only the new ones.
\begin{definition}
Lines ${\cal L}(\nu,\mu)$ through the origin, with exactly $d$ points are called 
maximal lines.
\end{definition}
\begin{proposition}\label{1234}
\begin{itemize}
\mbox{}

\item[(1)]
If $\lambda$ is an inverible element (unit) in ${\mathbb Z}(d)$
then ${\cal L}(\nu,\mu)={\cal L}(\nu \lambda,\mu \lambda)$, and if 
$\lambda$ is a non-inverible element in ${\mathbb Z}(d)$ then 
${\cal L}(\nu \lambda,\mu \lambda)\subset {\cal L}(\nu,\mu)$.

\item[(2)]
The number of points in ${\cal L}(\nu,\mu)$ is $d/{\mathfrak G}(\nu, \mu, d)$
(where ${\mathfrak G}(\nu, \mu, d)$ is the greatest common divisor of these numbers).
If $d$ is a prime number, all lines are maximal (apart from the ${\cal L}(0,0)$ 
which consists of the origin only).

\item[(3)]
Let $d_i$ be a divisor of $d$.
Lines through the origin with exactly $d_i$ points, consists of points $(\rho, \sigma)$
with coordinates $\rho,\sigma$ in the $(d/d_i){\mathbb Z}(d_i)$ subgroup of ${\mathbb Z}(d)$.

\item[(4)]
Let $d_i$ be a divisor of $d$.
There is a total number of $\psi(d_i)$ lines through the origin with $d_i$ points.
Consequently, there are $\psi(d)$ maximal lines through the origin.

\item[(5)]
The intersection of two lines ${\cal L}(\nu ,\mu )$ and ${\cal L}(\rho, \sigma)$
is a `subline' and the number of its points is a divisor of $d$.
When $d$ is a prime number, two lines can only have one point in common.
\item[(6)]
The lines ${\cal L}(\rho, \sigma)$ and $g(\kappa, \lambda |\mu , \nu){\cal L}(\rho, \sigma)$ 
have the same number of points.

\item[(7)]
In the special case that $d$ is a prime number,
all lines through the origin are given by
\begin{eqnarray}
{\cal L}(0,1);\;\;\;\;g(0,1|-1,-\lambda){\cal L}(0,1);\;\;\;\;\lambda=0,1,..., d-1
\end{eqnarray}
\end{itemize}
\end{proposition}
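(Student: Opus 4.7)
The plan is to view each line ${\cal L}(\nu,\mu)$ as the image of the $\mathbb{Z}(d)$-linear map $\phi_{\nu,\mu}\colon\alpha\mapsto(\nu\alpha,\mu\alpha)$ from $\mathbb{Z}(d)$ to $\mathbb{Z}(d)\times\mathbb{Z}(d)$, so that every item reduces to elementary cyclic-group arithmetic or to an orbit count using the Jordan and Euler totients. Since the author states that only the new items are proved, I will sketch all seven for completeness.

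For (1), if $\lambda$ is a unit then $\alpha\mapsto\lambda\alpha$ is a bijection of $\mathbb{Z}(d)$ and reparameterising gives ${\cal L}(\nu\lambda,\mu\lambda)={\cal L}(\nu,\mu)$; if $\lambda$ is a non-unit then $\lambda\mathbb{Z}(d)$ is a proper subgroup of $\mathbb{Z}(d)$, forcing strict containment. For (2), a direct computation shows $\ker\phi_{\nu,\mu}=(d/{\mathfrak G}(\nu,\mu,d))\,\mathbb{Z}(d)$, of cardinality ${\mathfrak G}(\nu,\mu,d)$, so $|{\cal L}(\nu,\mu)|=d/{\mathfrak G}(\nu,\mu,d)$; for prime $d$ any nonzero pair has ${\mathfrak G}=1$. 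For (3), if the line has $d_i$ points then ${\mathfrak G}(\nu,\mu,d)=d/d_i$, so $\nu$ and $\mu$ are both divisible by $d/d_i$ and every point lies in $(d/d_i)\mathbb{Z}(d_i)\times(d/d_i)\mathbb{Z}(d_i)$.

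For (4), by (3) the lines with exactly $d_i$ points biject with the maximal lines of $\mathbb{Z}(d_i)\times\mathbb{Z}(d_i)$ via $(d/d_i)(\nu',\mu')\leftrightarrow(\nu',\mu')$ with ${\mathfrak G}(\nu',\mu',d_i)=1$. By CRT every unit of $\mathbb{Z}(d_i)$ lifts to a unit of $\mathbb{Z}(d)$, so the $\mathbb{Z}(d)^{*}$-equivalence of (1) reduces to $\mathbb{Z}(d_i)^{*}$-equivalence of coprime pairs. B\'ezout gives $a\nu'+b\mu'\equiv 1\pmod{d_i}$, whence any stabilising unit $u$ satisfies $u\equiv 1$; hence each of the $J_2(d_i)$ coprime pairs lies in an orbit of full length $\varphi(d_i)$, giving $J_2(d_i)/\varphi(d_i)=\psi(d_i)$ lines. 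Taking $d_i=d$ yields the $\psi(d)$ maximal lines.

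For (5), the intersection of two lines is a subgroup of the cyclic group ${\cal L}(\nu,\mu)$, hence cyclic of order dividing $d$; in the prime case the only proper subgroup is trivial. For (6), $\mathbb{Z}(d)$-linearity of $g$ gives $g{\cal L}(\rho,\sigma)={\cal L}(g(\rho,\sigma))$, and bijectivity of $g$ preserves cardinality. For (7), direct matrix multiplication gives $g(0,1|-1,-\lambda){\cal L}(0,1)={\cal L}(1,-\lambda)$; as $\lambda$ ranges over $\mathbb{Z}(p)$ these $p$ lines are pairwise distinct (different ``slopes'') and distinct from the vertical ${\cal L}(0,1)$, producing $p+1=\psi(p)$ lines, which by (4) exhausts all of them. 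The only delicate step is (4), where one must simultaneously verify that the unit-action descends from $\mathbb{Z}(d)^{*}$ to $\mathbb{Z}(d_i)^{*}$ and that the stabilisers of coprime pairs are trivial; once this is in hand the identity $\psi(d_i)=J_2(d_i)/\varphi(d_i)$ cleanly closes the count.
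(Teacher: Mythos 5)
Your proposal is correct, but it takes a more self-contained route than the paper on the items that actually carry content. The paper simply cites its earlier reference for (1), (2) and (5), and its proof of (4) is essentially a reduction: it takes the count of $\psi(d)$ maximal lines as already established and observes that lines with $d_i$ points live in the embedded copy $(d/d_i){\mathbb Z}(d_i)\times (d/d_i){\mathbb Z}(d_i)$, hence number $\psi(d_i)$. You instead prove the $\psi(d)$ formula from scratch: maximal lines correspond to coprime pairs $(\nu,\mu)$ modulo the action of ${\mathbb Z}(d)^{*}$, the B\'ezout argument shows this action is free, and orbit counting gives $J_2(d)/\varphi(d)=\psi(d)$. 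That is the genuinely new ingredient relative to the paper, and it makes the whole proposition independent of the external reference; the only mild gap is that you should also note that two maximal lines coincide as sets iff their generators differ by a unit (for maximal lines the generator is a point of the line whose coefficient must be coprime to $d$, so this is immediate), which is what justifies identifying lines with orbits in the first place. Your kernel computation in (2) and the subgroup argument in (5) are clean replacements for the cited material. For (7) you close the argument by comparing cardinalities against the count $\psi(p)=p+1$ from (4), whereas the paper argues directly that an arbitrary line ${\cal L}(\rho,\sigma)$ with $\rho\ne 0$ equals $g(0,1|-1,-\lambda){\cal L}(0,1)$ with $\lambda=-\rho^{-1}\sigma$; both are valid, the paper's version being constructive and yours shorter once (4) is in hand.
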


\begin{proof}
Parts (1),(2) and (5) have been proved in \cite{SV} (proposition 2.1).
\begin{itemize}
\item[(3)]
The number of points in a line ${\cal L}(\rho, \sigma)$
through the origin, is $d/{\mathfrak G}(\rho, \sigma, d)$.
Therefore for lines with $d_i$ points, ${\mathfrak G}(\rho, \sigma, d)=d/d_i$.
This means that $\rho=\rho 'd/d_i$ where $\rho '$ is an element of ${\mathbb Z}(d_i)$ 
(and similarly for $\sigma$).
 
\item[(4)]
We have proved above that the lines ${\cal L}(\nu,\mu)$ with $d_i$ points have points with
components $\nu,\mu$ in the $(d/d_i){\mathbb Z}(d_i)$ subgroup of ${\mathbb Z}(d)$.
In addition to that there are $\psi (d)$ lines 
through the origin ${\cal L}(\rho, \sigma)$ where $\rho,\sigma \in {\mathbb Z}(d)$
with exactly $d$ points. Consequently, there are $\psi (d_i)$ lines through the origin with 
$d_i$ points.

\item[(6)]
Since $det(g)\ne 0$ two different points on a line are mapped to two other points which are different 
from each other. This proves the statement.

\item[(7)]
We first point out that the line ${\cal L}(0, \tau)$ (with $\tau \ne 0$) is the same as the line 
${\cal L}(0, 1)$. 
We next prove that any other line ${\cal L}(\rho, \sigma)$ (which will have $\rho \ne 0$)
can be written as $g(0,1|-1,-\lambda){\cal L}(0,1)$.
We need to prove that given a point $(\alpha \rho, \alpha \sigma)$  on the line
${\cal L}(\rho, \sigma)$ (where $\alpha \ne 0$ for points other than the origin),
there exist a point  $(0, \beta)$ on the line
${\cal L}(0, 1)$ such that $\beta=\alpha \rho$ and $-\lambda \beta =\alpha \sigma$.
But this is true for $\beta =\alpha \rho$ and $\lambda=-\rho ^{-1}\sigma$ (the $\rho ^{-1}$ exists because
$d$ is a prime number and ${\mathbb Z}(d)$ is a field). 
This completes the proof.
\end{itemize}
\end{proof}

The $\psi(d)$ maximal lines through the origin, have $d$ points each.
Except the origin, these lines have $\psi(d) (d-1)$ points in total, some of which have been counted many times
simply because the geometry is not near-linear and two lines might have many points in common.
But except the origin, there are only $d^2-1$ points in ${\mathbb Z}(d)\times {\mathbb Z}(d)$. 
As a measure of the deviation of our geometry from a near-linear geometry, we introduce the redundancy parameter
\begin{eqnarray}\label{red}
{\mathfrak r}=\frac{\psi(d)(d-1)}{d^2-1}-1=\frac{\psi(d)}{d+1}-1.
\end{eqnarray}
When $d$ is a prime number, two different lines have at most one point in common, the geometry is near-linear, 
and ${\mathfrak r}=0$.

Much work in finite geometries is on near-linear geometries where
two points belong to at most one line \cite{f1,f2,f3}.
Our geometry is {\bf not} a near-linear geometry 
because two lines can have $d_1$ points in common, where $d_1|d$ ($d_1$ is a divisor of $d$). 
Only in the case of prime $d$, the geometry is a near-linear geometry.

Different aspects of the ${\mathbb Z}(d)\times {\mathbb Z}(d)$ as a finite geometry have been studied in 
\cite{r1,r2,r3,r4,r5,r6}. In particular we note that the isotropic lines in \cite{r5} correspond to the 
maximal lines in the present work. In the present paper we emphasize the fact that two lines may have 
many points in common, and this leads to the concept of sublines and to the 
`geometric redundancy' in Eq.(\ref{red}).
Later this geometric redundancy, will be related through duality to 
redundancy in the weak mutually unbiased bases.

\subsection{Factorization}

Let $d=d_1...d_N$ where any pair of $d_i,d_j$ are coprime.
Based on the Chinese remainder theorem, we can introduce one-to-one maps between
${\mathbb Z}(d)$ and ${\mathbb Z}(d_1)\times...\times {\mathbb Z}(d_N)$.
Good used them in fast Fourier transforms\cite{Good}.

Below we consider the case $d=d_1d_2$ where $d_1,d_2$ are prime numbers.
Let
\begin{equation}\label{20}
r_1=d_2;\;\;\;\;r_2=d_1;\;\;\;\;
t_i r_i=1\;( mod\ d_i);\;\;\;\;\;
s_i=t_i r_i\in {\mathbb Z}(d).
\end{equation}
The first map is 
\begin{eqnarray}\label{map1}
m\leftrightarrow (m_1,m_2);\;\;\;\;\;
m_i=m (mod\ d_i);\;\;\;\;m=m_1s_1+m_2s_2
\end{eqnarray}
The second map is 
\begin{eqnarray}\label{map2}
m\leftrightarrow (\overline m_1,\overline m_2);\;\;\;\;\;\overline m_i=mt_i=m_it_i(mod\ d_i);\;\;\;\;\;
m=\overline m_1 r_1+ \overline m_2 r_2 (mod\ d)
\end{eqnarray}
We next introduce the following bijective map
\begin{eqnarray}
{\mathbb Z}(d)\times {\mathbb Z}(d)\;\;\leftrightarrow\;\;[{\mathbb Z}(d_1)\times{\mathbb Z}(d_2)]
\times [{\mathbb Z}(d_1)\times {\mathbb Z}(d_2)]
\end{eqnarray}
given by 
\begin{eqnarray}\label{map100}
(m,n)\leftrightarrow (m_1,m_2,\overline n_1,\overline n_2).
\end{eqnarray}
We use here the first map of Eq.(\ref{map1}) for $m \leftrightarrow (m_1,m_2)$
and the second  map of Eq.(\ref{map2}) for $n \leftrightarrow (\overline n_1,\overline n_2)$.
In ${\mathbb Z}(d_i)\times{\mathbb Z}(d_i)$ we consider the lines
\begin{eqnarray}
{\cal L}^{(i)}(\nu _i,\mu _i)=\{(\nu _i\alpha _i,\mu _i\alpha _i)\;|\; 
\alpha _i\in {\mathbb Z}(d_i)\}. 
\end{eqnarray}
We can rewrite the 
line  ${\cal L}(\nu,\mu)$ in ${\mathbb Z}(d)\times{\mathbb Z}(d)$ given in Eq.(\ref{l}), in terms
of the two lines ${\cal L}^{(i)}(\nu _i,\mu_i)$ in ${\mathbb Z}(d_i)\times{\mathbb Z}(d_i)$, as
\begin{eqnarray}\label{589}
{\cal L}(\nu,\mu)=
{\cal L}^{(1)}(\nu _1,\overline \mu _1)\times {\cal L}^{(2)}(\nu _2, \overline \mu _2) 
\end{eqnarray}
To prove this we use the fact that if $\alpha \leftrightarrow (\alpha _1,\alpha _2)$ and 
$\nu\leftrightarrow (\nu_1,\nu _2)$
then $\alpha \nu\leftrightarrow (\alpha _1\nu_1,\alpha _2\nu_2)$.
In the case $(\nu _1,\overline \mu _1)\ne(0,0)$ and $(\nu _2,\overline \mu _2)\ne(0,0)$
the lines ${\cal L}^{(1)}(\nu _1,\overline \mu _1)$ and ${\cal L}^{(2)}(\nu _2, \overline \mu _2)$
have $d_1$ and $d_2$ points correspondingly, and the line
${\cal L}(\nu,\mu)$ has $d_1d_2$ points.
We refer to ${\cal L}^{(1)}(\nu _1,\overline \mu _1)$ and ${\cal L}^{(2)}(\nu _2, \overline \mu _2)$ 
as the first component line and the second component line of ${\cal L}(\nu,\mu)$.

\begin{proposition}
\mbox{}
\begin{itemize}
\item[(1)]
The set of $\psi(d)$ maximal lines in ${\mathbb Z}(d)\times {\mathbb Z}(d)$ through the origin,
(where $d=d_1d_2$ and $d_1,d_2$ are prime numbers)
is given by
\begin{eqnarray}\label{AS}
&&{\cal L}_1={\cal L}^{(1)}(0,1)\times {\cal L}^{(2)}(0,1)
;\;\;\;\;\nonumber\\
&&{\cal L}_{2+\lambda _2}=
{\cal L}^{(1)}(0,1)\times [g(0,1|-1,-\lambda _2){\cal L}^{(2)}(0,1)]\nonumber\\
&&{\cal L}_{2+d_2+\lambda _1}=
[g(0,1|-1,-\lambda _1){\cal L}^{(1)}(0,1)]\times {\cal L}^{(2)}(0,1)\nonumber\\
&&{\cal L}_{2+d_1+d_2+\lambda _2+\lambda _1d_2}=
[g(0,1|-1,-\lambda _1){\cal L}^{(1)}(0,1)]\times [g(0,1|-1,-\lambda _2){\cal L}^{(2)}(0,1)]
\end{eqnarray}
where $0\le \lambda _1\le d_1-1$ and $0\le \lambda _2\le d_2-1$.
\item[(2)]
These maximal lines through the origin, can also be derived through symplectic transformations in ${\mathbb Z}(d)\times {\mathbb Z}(d)$ 
acting on the points of the line ${\cal L}_1$, as follows:
\begin{eqnarray}\label{AS1}
&&{\cal L}_{2+\lambda _2}=g(s_1,t_2s_2|-d_1,s_1-\lambda _2s_2){\cal L}_1\nonumber\\
&&{\cal L}_{2+d_2+\lambda _1}=g(s_2,t_1s_1|-d_2,s_2-\lambda _1s_1){\cal L}_1\nonumber\\
&&{\cal L}_{2+d_1+d_2+\lambda _2+\lambda _1d_2}=
g(0,\eta|-d_1-d_2,-\lambda _1s_1-\lambda_2s_2){\cal L}_1
\end{eqnarray}
Here $\eta=t_1^2d_2+t_2^2d_1$.

\item[(3)]
Acting with the matrices $g(\kappa, \lambda |\mu , \nu)$ on a maximal line ${\cal L}(\rho, \sigma)$
through the origin, we get all $\psi (d)$ maximal lines through the origin.

\end{itemize}
\end{proposition}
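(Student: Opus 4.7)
The plan is to treat the three parts in order, using the factorization (\ref{589}) as the bridge between $\mathbb{Z}(d)\times\mathbb{Z}(d)$ and its two prime-order component spaces. For part (1), I would first note that Proposition \ref{1234}(4) gives the total count of maximal lines as $\psi(d)=(d_1+1)(d_2+1)$, while the list (\ref{AS}) contains $1+d_2+d_1+d_1d_2$ entries, matching this total. To see that (\ref{AS}) is exhaustive, I would appeal to (\ref{589}): the line $\mathcal{L}(\nu,\mu)$ has $d_1d_2$ points iff both component lines are maximal, i.e.\ $(\nu_i,\bar\mu_i)\neq(0,0)$ for $i=1,2$. Applying Proposition \ref{1234}(7) in each prime-order $\mathbb{Z}(d_i)\times\mathbb{Z}(d_i)$ shows that the $d_i+1$ maximal lines through the origin there are exactly $\mathcal{L}^{(i)}(0,1)$ together with $g(0,1|-1,-\lambda_i)\mathcal{L}^{(i)}(0,1)$ for $\lambda_i=0,\ldots,d_i-1$. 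All Cartesian products of these choices reproduce the four families in (\ref{AS}).

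Part (2) is the substantive step and the main obstacle; I would verify each case by direct computation. For each of the three displayed matrices I would first check that the determinant is $1\bmod d$ using the Chinese remainder identities $s_1+s_2\equiv 1$, $s_1s_2\equiv 0$, $s_i^2\equiv s_i$, $d_1t_2\equiv s_2$, $d_2t_1\equiv s_1 \pmod{d}$, and (for the third matrix) the fact that $\eta=t_1^2d_2+t_2^2d_1$ reduces to $t_i\pmod{d_i}$ and is therefore a unit in $\mathbb{Z}(d)$. Then I would apply the matrix to a generic point $(0,\alpha)$ on $\mathcal{L}_1$, reduce the image coordinates modulo $d_1$ and $d_2$, and use the identification $\bar\mu_i=\mu_it_i$ together with the fact that $\mathcal{L}^{(i)}(a,b)=\mathcal{L}^{(i)}(\lambda a,\lambda b)$ whenever $\lambda$ is a unit in $\mathbb{Z}(d_i)$, in order to match the image with the corresponding entry of (\ref{AS}). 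The difficulty is not conceptual but bookkeeping: one must track which of the two maps (\ref{map1}) or (\ref{map2}) is applied to each coordinate, since the first slot of each pair uses the direct map while the second slot uses the overlined map, and the intertwining with the symplectic action produces factors of $t_i$ and $r_i$ that must be reabsorbed into the component-line generators.

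Part (3) then drops out as a consequence of parts (1) and (2). Proposition \ref{1234}(6) tells us that $Sp(2,\mathbb{Z}(d))$ preserves the number of points on a line, so the orbit of any maximal line is contained in the set of maximal lines. Part (2) exhibits, for each of the $\psi(d)$ maximal lines through the origin, a symplectic matrix carrying $\mathcal{L}_1$ onto it, so the group acts transitively on the set of maximal lines through the origin. Hence for any maximal $\mathcal{L}(\rho,\sigma)$, writing $\mathcal{L}(\rho,\sigma)=g_0\mathcal{L}_1$ and composing the matrices of part (2) with $g_0^{-1}$ shows that every maximal line arises as $g\mathcal{L}(\rho,\sigma)$ for some $g\in Sp(2,\mathbb{Z}(d))$.
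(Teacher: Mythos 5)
Your argument is correct; part (1) coincides with the paper's (count the $(d_1+1)(d_2+1)$ entries, note each is maximal because its component lines are, and invoke Proposition \ref{1234}(4)), but parts (2) and (3) take a genuinely different route. The paper derives the matrices in part (2) from Eq.~(\ref{AS}) by invoking the tensor-product factorization of symplectic matrices proved in \cite{SV}, namely $g(\kappa,\lambda|\mu,\nu)=g^{(1)}(\kappa_1,\lambda_1 r_1|\overline\mu_1,\nu_1)\otimes g^{(2)}(\kappa_2,\lambda_2 r_2|\overline\mu_2,\nu_2)$ (Eq.~(\ref{D175})), which converts the pair of component matrices $g^{(i)}(0,1|-1,-\lambda_i)$ directly into the displayed matrices on ${\mathbb Z}(d)\times{\mathbb Z}(d)$; the same formula, applied to an arbitrary factorized line, gives part (3) by reducing transitivity to the prime components via Proposition \ref{1234}(7). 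You instead verify part (2) by direct computation --- checking $\det g=1$ from the idempotent identities and computing the image of $(0,\alpha)$ componentwise --- and then obtain part (3) as a purely group-theoretic corollary: parts (1) and (2) exhibit the $Sp(2,{\mathbb Z}(d))$-orbit of ${\cal L}_1$ as the full set of maximal lines through the origin, Proposition \ref{1234}(6) keeps the orbit inside that set, and composing with $g_0^{-1}$ gives transitivity from any maximal line. Your computation does go through: for instance the first matrix sends ${\cal L}_1$ to ${\cal L}(t_2s_2,\,s_1-\lambda_2 s_2)$, whose components under Eq.~(\ref{589}) are ${\cal L}^{(1)}(0,t_1)={\cal L}^{(1)}(0,1)$ and ${\cal L}^{(2)}(t_2,-\lambda_2 t_2)={\cal L}^{(2)}(1,-\lambda_2)=g(0,1|-1,-\lambda_2){\cal L}^{(2)}(0,1)$, as required, and the determinant is $s_1^2-\lambda_2 s_1s_2+t_2s_2d_1\equiv s_1+s_2\equiv 1$. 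What the paper's route buys is an explanation of where the matrices come from and a proof of (3) that does not depend on the explicit list in (2); what your route buys is independence from the external result Eq.~(\ref{D175}) of \cite{SV}, at the cost of the bookkeeping you yourself flag as the main hazard.
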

\begin{proof}
\mbox{}
\begin{itemize}
\item[(1)]
The lines ${\cal L} ^{(1)}(0,1)$, ${\cal L}^{(2)}(0,1)$ have $d_1, d_2$
points, correspondingly.
According to proposition \ref{1234} the lines 
$g(0,1|-1,-\lambda _1){\cal L}^{(1)}(0,1)$, $g(0,1|-1,-\lambda _2){\cal L}^{(2)}(0,1)$ also have  
$d_1, d_2$ points, correspondingly.
Consequently all lines in Eq.(\ref{AS}) have $d=d_1d_2$ points each, and they are maximal lines.
These lines are different from each other, and Eqs.(\ref{AS}) give
$\psi(d)=(d_1+1)(d_2+1)$ such lines.
Since the number of maximal lines in ${\mathbb Z}(d)\times {\mathbb Z}(d)$ through the origin
is $\psi(d)$, Eqs.(\ref{AS}) give all maximal lines.
\item[(2)]
We have proved in \cite{SV} that 
\begin{eqnarray}\label{D175}
g(\kappa, \lambda |\mu, \nu)=g^{(1)}(\kappa _1, \lambda _1 r_1|\overline \mu _1, \nu _1)\otimes
g^{(2)}(\kappa _2, \lambda _2 r_2|\overline \mu _2, \nu _2)
;\;\;\;\;
\kappa, \lambda, \mu, \nu \in {\mathbb Z}(d)
\end{eqnarray}
Here the $\kappa _i, \lambda _i , \nu _i \in {\mathbb Z}(d_i)$ are the components of 
$\kappa , \lambda , \nu $ correspondingly, according to the map of Eq.(\ref{map1}).
${\overline \mu} _i \in {\mathbb Z}(d_i)$ are the components of $\mu$ 
according to the map of Eq.(\ref{map2}).
Using this on the lines in Eq.(\ref{AS}) which are products of two lines in
${\mathbb Z}(d_1)\times {\mathbb Z}(d_1)$ and ${\mathbb Z}(d_2)\times {\mathbb Z}(d_2)$,
we get the lines in Eq.(\ref{AS1}) in ${\mathbb Z}(d)\times {\mathbb Z}(d)$.

\item[(3)]

For $d=d_1d_2$ where $d_1,d_2$ are prime numbers, given an arbitrary line factorized as in Eq.(\ref{589}),
we act with symplectic transformations and using Eq.(\ref{D175}) we get
\begin{eqnarray}\label{5890}
g(\kappa, \lambda |\mu, \nu){\cal L}(\nu,\mu)=g^{(1)}(\kappa _1, \lambda _1 r_1|\overline \mu _1, \nu _1)
{\cal L}^{(1)}(\nu _1,\overline \mu _1)\times 
g^{(2)}(\kappa _2, \lambda _2 r_2|\overline \mu _2, \nu _2){\cal L}^{(2)}(\nu _2, \overline \mu _2) 
\end{eqnarray}
For a prime number, Eq.(6) shows that we can act with symplectic transformations on
${\cal L}(0,1)$ and get all lines through the origin (these lines are maximal lines).
Therefore we can get any line by acting with symplectic transformations on any other line.
This in conjuction with Eq.(\ref{5890}) proves the statement.

\end{itemize}
\end{proof}

We note that $d_i$ are prime numbers and the ${\mathbb Z}(d_i)\times{\mathbb Z}(d_i)$ 
geometry is near-linear geometry, i.e., two lines have at most one point in common.
The following proposition explains how in ${\mathbb Z}(d)\times{\mathbb Z}(d)$
two lines may have one or $d_1$ or $d_2$ points in common.

\begin{proposition}
Let 
${\cal L}(\nu,\mu)={\cal L} ^{(1)}(\nu _1,\overline \mu _1)\times {\cal L}^{(2)}(\nu _2, \overline \mu _2)$ 
and
${\cal L}(\nu',\mu')={\cal L}^{(1)}(\nu _1',\overline \mu _1')\times {\cal L}^{(2)}(\nu _2',\overline \mu _2')$
be a pair of maximal lines through the origin in ${\mathbb Z}(d)\times{\mathbb Z}(d)$.
There are $\psi(d)[\psi (d)-1]/2$ such pairs
(here the pairs are not ordered and each pair contains two lines which are different from each other).
Then one of the following holds:
\begin{itemize}
\item[(1)]
The two lines have $d_2$ points in common if and only if the second component line is the same
(i.e., $\nu_2=\alpha \nu_2'$ and $\overline \mu_2=\alpha \overline \mu_2'$ where 
$\alpha\in {\mathbb Z}(d_2)$ and $\alpha \ne 0$).
There are $d_1\psi(d)/2$ such pairs of lines.

\item[(2)]
The two lines have $d_1$ points in common if and only if the first component line is the same
(i.e., $\nu_1=\alpha \nu_1'$ and $\overline \mu_1=\alpha \overline \mu_1'$ where 
$\alpha\in {\mathbb Z}(d_1)$ and $\alpha \ne 0$).
There are $d_2\psi(d)/2$ such pairs of lines.

\item[(3)]
The two lines have only the origin in common, if and only if both component lines are different.
There are $d\psi(d)/2$ such pairs of lines.
\end{itemize}

\end{proposition}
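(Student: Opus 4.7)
The plan is to combine the factorization established in Eq.~(\ref{589}) with the fact that, because $d_1$ and $d_2$ are prime, each component geometry $\mathbb{Z}(d_i)\times \mathbb{Z}(d_i)$ is near-linear by part~(5) of Proposition~\ref{1234}. Thus two distinct maximal component lines through the origin of $\mathbb{Z}(d_i)\times \mathbb{Z}(d_i)$ intersect only at the origin, and by part~(4) each component geometry contains $\psi(d_i)=d_i+1$ maximal lines through the origin, so $\psi(d)=(d_1+1)(d_2+1)$ maximal lines in $\mathbb{Z}(d)\times \mathbb{Z}(d)$.

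First I would observe that the bijection of Eq.~(\ref{map100}) is purely set-theoretic, so intersections factor: the common points of $\mathcal{L}(\nu,\mu)$ and $\mathcal{L}(\nu',\mu')$ correspond (under the bijection) to the Cartesian product of the two component intersections $\mathcal{L}^{(i)}(\nu_i,\overline\mu_i)\cap \mathcal{L}^{(i)}(\nu_i',\overline\mu_i')$ for $i=1,2$. By near-linearity, the $i$th factor equals the full component line (contributing $d_i$ points, precisely when $(\nu_i',\overline\mu_i')$ is a nonzero $\mathbb{Z}(d_i)$-multiple of $(\nu_i,\overline\mu_i)$) or else collapses to the origin alone. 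Since the two full lines are distinct, at most one component can coincide, yielding the three mutually exclusive cases with intersection sizes $d_2$, $d_1$, and $1$ exactly as claimed in parts~(1)--(3), including the stated algebraic characterizations.

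For the cardinality of each class I would argue combinatorially. In case~(1) the shared second-component line is any one of the $d_2+1$ maximal component lines in $\mathbb{Z}(d_2)\times \mathbb{Z}(d_2)$, and the two distinct first-component lines form one of $\binom{d_1+1}{2}=d_1(d_1+1)/2$ unordered pairs, giving $(d_2+1)\cdot d_1(d_1+1)/2=d_1\psi(d)/2$ pairs of maximal lines; case~(2) is symmetric with the factors swapped, yielding $d_2\psi(d)/2$. For case~(3), each unordered pair $\{A,B\}$ of distinct first-component lines together with each unordered pair $\{C,D\}$ of distinct second-component lines produces \emph{two} unordered pairs of product lines, namely $\{A\times C,\,B\times D\}$ and $\{A\times D,\,B\times C\}$, hence $2\binom{d_1+1}{2}\binom{d_2+1}{2}=d\psi(d)/2$ pairs. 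Summing the three contributions gives $\psi(d)(d+d_1+d_2)/2=\psi(d)[\psi(d)-1]/2$, confirming consistency with the prescribed total number of unordered pairs.

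The main obstacle is precisely the double-counting step in case~(3): overlooking the ``straight versus crossed'' pairing of the unordered component-pair choices halves the count. Everything else is a mechanical consequence of the factorization of Eq.~(\ref{589}) together with the near-linearity of the prime-dimensional component geometries.
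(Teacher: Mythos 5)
Your proposal is correct and follows essentially the same route as the paper: factorize each maximal line into its two component lines via the Chinese-remainder bijection, use near-linearity of the prime-dimensional component geometries to reduce the intersection to a product of component intersections, and count pairs combinatorially (your explicit handling of the ``straight versus crossed'' factor of two in case (3) is just a more careful phrasing of the paper's count $[d_1(d_1+1)][d_2(d_2+1)]/2$). The consistency check that the three class sizes sum to $\psi(d)[\psi(d)-1]/2$ is a nice addition but not a departure from the paper's argument.
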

\begin{proof}
We have $\psi(d)$ maximal lines and from this follows that there are 
$\psi(d)[\psi (d)-1]$ ordered pairs of lines (where the lines in each pair are different from each other).
Therefore we have $\psi(d)[\psi (d)-1]/2$ pairs of lines which are not ordered.
 
We next consider three cases:
\begin{itemize}
\item[(1)]
The second component lines of the two lines are the same, and the first component lines
${\cal L}^{(1)}(\nu _1',\overline \mu _1')$ and ${\cal L}^{(1)}(\nu _1,\overline \mu _1)$
are maximal lines in ${\mathbb Z}(d_1)\times{\mathbb Z}(d_1)$
and they have only the origin $(0,0)$ in common (because $d_1$ is a prime number).
The $(0,0)$ combined with the $d_2$ points in ${\cal L}^{(2)}(\nu _2, \overline \mu _2)$
as described in Eq.(\ref{map100}), produce the $d_2$ common points between the lines
${\cal L} ^{(1)}(\nu _1,\overline \mu _1)\times {\cal L}^{(2)}(\nu _2, \overline \mu _2)$
and ${\cal L}^{(1)}(\nu _1',\overline \mu _1')\times {\cal L}^{(2)}(\nu _2,\overline \mu _2)$.

It is easily seen that there are 
$[d_1(d_1+1)][d_2+1]/2$ pairs of lines, and this is equal to $d_1\psi (d)/2$.

\item[(2)]
The proof here is analogous to the previous case.

\item[(3)]
Both component lines in the first line, are different from their counterparts in the second line.
If the two lines had a point $(\nu,\mu)\ne (0,0)$ in common, then
the corresponding point $(\nu _1,\overline \mu _1)$ would be a common point of the first component lines
and the corresponding point $(\nu _2,\overline \mu _2)$ would be a common point of the second component lines.
But at least one of the $(\nu _1,\overline \mu _1)$
and $(\nu _2,\overline \mu _2)$ would be different from
$(0,0)$. This would mean that lines in ${\mathbb Z}(d_i)\times{\mathbb Z}(d_i)$ would have more 
than one point in common, which is not possible because $d_i$ is a prime number.

There are $[d_1(d_1+1)][d_2(d_2+1)]/2$ pairs of lines in this case, and this is equal to $d\psi (d)/2$.
  
\end{itemize}

The inverse of these statements, follows immediately from the fact that above we have 
exhausted all possible cases
where the two lines have $1,d_1,d_2$ points in common.

\end{proof}

\begin{remark}\label{Rema}
We consider the set ${\cal S}$ of all maximal lines through the origin in
${\mathbb Z}(d)\times {\mathbb Z}(d)$ (where $d=d_1d_2$ and $d_1,d_2$ are prime numbers with
$d_1<d_2$).
We will introduce a partition of this set such that any two lines in each subset intersect only at the origin.
In order to do this we introduce the notation
\begin{eqnarray}
{\mathfrak L}^{(1)}_{-1}&=&{\cal L}(0,1)\nonumber\\
{\mathfrak L}^{(1)}_{\lambda}&=&g(0,1|-1,-\lambda){\cal L}(0,1);\;\;\;\;\;\lambda=0,...,d_1-1
\end{eqnarray} 
The lines ${\mathfrak L}^{(1)}_i$ are defined for $i=-1,...,d_1-1$.
It will be convinient to take $i\in {\mathbb Z}(d_1+1)$.
Using similar notation for the second subsystem, we consider the sets
\begin{eqnarray}
{\cal S}_n=\{{\mathfrak L}^{(1)}_i\times {\mathfrak L}^{(2)}_{i+n}\;|\;i\in {\mathbb Z}(d_1+1)\}
;\;\;\;\;\;n\in {\mathbb Z}(d_2+1)
\end{eqnarray}
The cardinality of ${\cal S}_n$ is $d_1+1$ and the cardinality of ${\cal S}$ is
$\psi(d)=(d_1+1)(d_2+1)$. 
Any two lines in the same subset ${\cal S}_n$ intersect only at the origin.
Lines in different subsets might intersect at more points.
\end{remark}

\section{Quantum systems with variables in ${\mathbb Z}(d)$}

We consider a quantum system with positions and momenta in ${\mathbb Z}(d)$ (with odd $d$).
The corresponding Hilbert space $H_d$ is $d$-dimensional.
We consider two bases, the positions $|{X};m\rangle$ and 
the momenta $|{P};m\rangle$ which are related through the Fourier transform:
\begin{equation}\label{PPPT}
|{P};n\rangle=F|{X};n\rangle;\;\;\;\;
F=d^{-1/2}\sum _{m,n}\omega (mn)\ket{X;m}\bra{X;n};\;\;\;\;
\omega(m)=\exp \left (i\frac {2\pi m}{d}\right ).
\end{equation}
We also consider the displacement operators 
\begin{eqnarray}
&&{Z}^ \alpha =\sum _{n\in {\mathbb Z}_d}\omega (n \alpha)|{X};n\rangle \langle {X};n|;\;\;\;\;\;
{X}^\beta=\sum _{n\in {\mathbb Z}_d}\omega (-n \beta)|{P};n\rangle \langle {P};n|\nonumber\\
&&X^\beta \;Z^\alpha = Z^\alpha \;X^\beta \;\omega (-\alpha \beta);\;\;\;\;\;X^d=Z^d={\bf 1}
\end{eqnarray}
General displacement operators are 
${D}(\alpha, \beta)={Z}^\alpha \;{X}^\beta \;\omega (-2^{-1}\alpha \beta)$.

We also define the symplectic transformations as
\begin{eqnarray}\label{DD175}
&&X'=S(\kappa, \lambda |\mu, \nu)\;X\;[S(\kappa, \lambda |\mu, \nu)]^{\dagger}
=D(\lambda ,\kappa)\nonumber\\
&&Z'=S(\kappa, \lambda |\mu, \nu)\;Z\;[S(\kappa, \lambda |\mu, \nu)]^{\dagger}=D(\nu,\mu)\nonumber\\
&&\kappa \nu -\lambda \mu=1;\;\;\;\;\;\kappa, \lambda, \mu, \nu \in {\mathbb Z}_d.
\end{eqnarray}
Explicit formulas for $S(\kappa, \lambda |\mu, \nu)$ have been given in \cite{1,Vie}.
Acting with them on the positions and momentum bases we get new bases:
\begin{eqnarray}\label{Sym7}
\ket{X(\kappa, \lambda|\mu,\nu);n} \equiv S(\kappa, \lambda|\mu,\nu)\ket{X;n};\;\;\;\;\;
\ket{P(\kappa, \lambda|\mu,\nu);n} \equiv S(\kappa, \lambda|\mu,\nu)\ket{P;n}
\end{eqnarray}

Mutually unbiased bases are a set
of orthonormal bases $\ket{{\mathfrak X}_i;n}$ where $n\in {\mathbb Z}(d)$ 
and the index $i$ takes values in some set $S$, such that 
for all $i,j\in S$ with $i\ne j$
\begin{eqnarray}\label{8}
|\langle {\mathfrak X}_i;n\ket{{\mathfrak X}_j;m}|=d^{-1/2}
\end{eqnarray}
It is known that for a prime number $d$ the states 
\begin{eqnarray}\label{S100}
\ket{X;m};\;\;\;\;\ket{X(0,1|-1,-\lambda );m};\;\;\;\;\;\lambda=0,1,...,d-1
\end{eqnarray}
are mutually unbiased.

Based on the maps  in Eqs(\ref{map1}), (\ref{map2})
we can factorize a system with variables in ${\mathbb Z}(d)$
in terms of two subsystems with variables in ${\mathbb Z}(d_1)$ and ${\mathbb Z}(d_2)$
(where $d=d_1d_2$ and $d_1,d_2$ are prime numbers)\cite{FA1,FA2}. 
There is an isomorphism between the Hilbert space $H_d$ 
and the product of the Hilbert spaces of the subsystems 
${H}_{d_1}\otimes {H}_{d_2}$, where
the position states in $H_d$ correspond to the products of position states in $H_{d_i}$, as follows: 
\begin{equation}\label{nn}
|X; m\rangle \leftrightarrow |X^{(1)}; \overline m_1\rangle \otimes|X^{(2)}; \overline m_2\rangle.
\end{equation}
The momentum states in $H_d$ correspond to the products of 
position states in $H_{d_i}$, as follows: 
\begin{equation}\label{nn}
|P;m\rangle \leftrightarrow |P^{(1)}; m_1\rangle \otimes |P^{(2)}; m_2\rangle
\end{equation}

\section{Weak mutually unbiased bases}

Below $d=d_1d_2$ where $d_1,d_2$ are prime numbers (different than $2$). 

\begin{definition}\label{def}
We consider a set $\ket{{\mathfrak B}_i;m}$ of $\ell$ orthonormal bases in the Hilbert space $H_d$,
where $m\in{\mathbb Z}(d)$ and $i=1,...,\ell$.
Let
\begin{eqnarray}
f_{ji}(n,m)=|\langle {\mathfrak B}_j;n \ket{{\mathfrak B}_i;m}|;\;\;\;\;\;f_{ij}(m,n)=f_{ji}(n,m)
\end{eqnarray}
We call them weak mutually unbiased bases if for any pair of different bases
($i\ne j$), one of the following holds:
\begin{itemize}
\item[(1)]
\begin{eqnarray}\label{B1}
f_{ji}(n,m)&=&d_1^{-1/2};\;\;{\rm for\;the}\;d_1d\;{\rm pairs}\; (n,m)\in {\mathbb Z}(d)\times {\mathbb Z}(d)\;\; {\rm such\;\; that}\;\;n=m\;({\rm mod}\;d_2)\;\;
\nonumber\\
f_{ji}(n,m)&=&0;\;\;\;{\rm for \;\;the \;\;rest}\;\;(n,m)\;\;  
\end{eqnarray}
 
\item[(2)]
\begin{eqnarray}\label{B2}
f_{ji}(n,m)&=&d_2^{-1/2};\;\;{\rm for\;the}\;d_2d\;{\rm pairs}\;(n,m)\in {\mathbb Z}(d)\times {\mathbb Z}(d)\;\; {\rm such\;\; that}\;\;n=m\;({\rm mod}\;d_1)
\nonumber\\
f_{ji}(n,m)&=&0;\;\;\;{\rm for \;\;the \;\;rest}\;\;(n,m)
\end{eqnarray}
 
\item[(3)]
\begin{eqnarray}\label{B3}
f_{ji}(n,m)=(d_1d_2)^{-1/2};\;\;\;\;{\rm for\;\;all}\;\;(n,m)
\in {\mathbb Z}(d)\times {\mathbb Z}(d)
\end{eqnarray}

\end{itemize}
\end{definition}

\begin{theorem}
\begin{itemize}
\mbox{}
\item[(1)]
Any set of weak mutually unbiased bases in $H_d$ can be written as
$\ket{B_j^{(1)};{\overline m_1}}\otimes \ket{B_j^{(2)};{\overline m_2}}$
where $\ket{B_j^{(1)};{\overline m_1}}$ is a set of mutually unbiased bases in
$H_{d_1}$, and $\ket{B_j^{(2)};{\overline m_2}}$ is a set of mutually unbiased bases in
$H_{d_2}$.
Some of the  bases $\ket{B_i^{(1)};{\overline m_1}}$ with different index $i$ might be the same basis,
and similarly for $\ket{B_i^{(2)};{\overline m_2}}$.

\item[(2)]
The maximum number of weak mutually unbiased bases is $\psi(d)$.
In this case, there are $\psi(d)[\psi(d)-1]/2$ sets of values $f_{ji}(n,m)$ 
with $i,j=1,..., \psi (d)$ and $i\ne j$.
From them, $d_1\psi(d)/2$ belong to the first category of Eq.(\ref{B1}),
$d_2\psi(d)/2$ belong to the second category of Eq.(\ref{B2}),
and $d\psi(d)/2$ belong to the third category of Eq.(\ref{B3}).

\end{itemize}
\end{theorem}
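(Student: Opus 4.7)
The plan is to work under the tensor product isomorphism $H_d \simeq H_{d_1} \otimes H_{d_2}$ provided by Eqs.~(\ref{nn}) and to show that the three overlap patterns of Definition~\ref{def} correspond precisely to the three ways two tensor product bases can differ: in the first factor only, in the second factor only, or in both. First I would establish the sufficiency direction by direct calculation. If $\ket{\mathfrak{B}_i;m} = \ket{B_i^{(1)};\overline{m}_1}\otimes\ket{B_i^{(2)};\overline{m}_2}$ with $B_i^{(k)}$ chosen from MUBs of $H_{d_k}$, then $f_{ji}(n,m)$ factorizes into a product of two component overlaps, each either $\delta_{\overline{n}_k,\overline{m}_k}$ (coinciding factor bases) or $d_k^{-1/2}$ (unbiased factor bases, since $d_k$ is prime). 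The equivalence $n\equiv m\pmod{d_2}\Leftrightarrow \overline{n}_2=\overline{m}_2$, which follows from Eq.~(\ref{map2}) and the invertibility of $t_2$ modulo $d_2$, then matches each resulting support/amplitude pattern with the appropriate clause of Definition~\ref{def}.

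For the structural converse in part (1), the zero pattern in cases (1) and (2) is decisive. After conjugating by a symplectic transformation I may assume one basis in the set is the position basis, which factorizes via Eq.~(\ref{nn}). In case (1), the vanishing of $\langle X;n\ket{\mathfrak{B}_i;m}$ whenever $\overline{n}_2\ne\overline{m}_2$ confines $\ket{\mathfrak{B}_i;m}$ to the $d_1$-dimensional subspace $H_{d_1}\otimes\ket{X^{(2)};\overline{m}_2}$, while the uniform amplitude $d_1^{-1/2}$ forces the remaining first-factor component to be unbiased to the position basis in $H_{d_1}$; using orthonormality within and across values of $\overline{m}_2$ then reconstructs $\mathfrak{B}_i$ as a tensor product whose second factor is the position basis. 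Case (2) is symmetric. The main obstacle is bases whose pair with the position basis falls only in case (3), since a uniform overlap magnitude does not by itself force factorization; these must be pinned down by chaining through their pairings with other bases in the set that lie in case (1) or (2), propagating the tensor structure across the whole set.

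For part (2), the bound $\ell\le\psi(d)$ follows from the prime-dimensional MUB bounds $\ell_k\le d_k+1$ in each factor and the identification from part (1) of distinct weak MUBs with distinct pairs of factor MUBs, giving at most $(d_1+1)(d_2+1)=\psi(d)$ of them. The category counts are then a direct combinatorial census: pairs sharing the second factor but differing in the first number $(d_2+1)\binom{d_1+1}{2}=d_1\psi(d)/2$ (case (1)); pairs sharing the first but differing in the second number $(d_1+1)\binom{d_2+1}{2}=d_2\psi(d)/2$ (case (2)); subtracting from $\binom{\psi(d)}{2}$ leaves $d\psi(d)/2$ pairs differing in both factors (case (3)), since $\psi(d)-1-d_1-d_2=d_1d_2=d$. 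These three numbers coincide exactly with the incidence statistics of pairs of maximal lines through the origin established in the preceding proposition, realizing the duality foreshadowed in the introduction.
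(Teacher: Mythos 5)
Your sufficiency calculation and the combinatorial census in part (2) are sound and agree with the paper: the counts $(d_2+1)\binom{d_1+1}{2}=d_1\psi(d)/2$, $(d_1+1)\binom{d_2+1}{2}=d_2\psi(d)/2$ and the remainder $d\psi(d)/2$ are exactly the paper's numbers, and the bound $\ell\le (d_1+1)(d_2+1)$ from the prime-dimension MUB bounds is also the paper's argument. The genuine gap is in the structural converse of part (1), and it sits exactly where you flag it. First, the reduction ``conjugate so that one basis is the position basis'' is not available: the conclusion asserts factorization with respect to the \emph{fixed} tensor decomposition of Eq.(\ref{nn}), and conjugating the whole set by the unitary carrying $\ket{{\mathfrak B}_1;m}$ to $\ket{X;m}$ preserves the overlaps but not that decomposition, unless the unitary is already a product operator --- which is part of what you are trying to prove. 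Second, and more seriously, the proposed ``chaining'' through case-(1)/(2) pairings cannot close the gap, because a legitimate set of weak mutually unbiased bases may have \emph{all} of its pairs in case (3): any two mutually unbiased bases of $H_d$ already satisfy Definition \ref{def} via Eq.(\ref{B3}), and neither need consist of product vectors. In that situation there is nothing to chain through, so the factorization cannot be extracted from the overlap data by your route.

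You should also know that the paper does not supply this step either: its proof simply \emph{asserts} at the outset that each $\ket{{\mathfrak B}_i;m}$ factorizes as $\ket{B_i^{(1)};\overline m_1}\otimes\ket{B_i^{(2)};\overline m_2}$ ``using Eq.(\ref{map2})'', which only relabels the index $m$ and does not produce product vectors. All of the paper's actual work goes into the step your proposal leaves implicit for case (3): \emph{given} the factorization, showing the component overlaps equal the constants $d_k^{-1/2}$. In cases (1)/(2) it squeezes $d_1^{-1/2}\le|\langle B_i^{(1)};\overline n_1\ket{B_j^{(1)};\overline m_1}|\le 1$ against $\sum_{\overline m_1}|\langle B_i^{(1)};\overline n_1\ket{B_j^{(1)};\overline m_1}|^2=1$; in case (3) it varies $m'$ with $m'=m\ ({\rm mod}\ d_1)$ to show $|\langle B_i^{(2)};\overline n_2\ket{B_j^{(2)};\overline m_2}|$ is independent of $\overline m_2$ and then invokes the same normalization. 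Whatever justification of the factorization you adopt, you still need this constancy argument for case-(3) pairs; your subspace-confinement argument only handles pairs falling in cases (1) and (2).
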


\begin{proof}
\mbox{}
\begin{itemize}

\item[(1)]
We are given a set $\ket{{\mathfrak B}_i;m}$ of $\ell$ orthonormal bases in the $d$-dimensional 
Hilbert space $H_d$,
where $m\in{\mathbb Z}(d)$ and $i=1,...,\ell$.
As explained above, using Eq.(\ref{map2}) this can be factorized as 
\begin{eqnarray}
\ket{{\mathfrak B}_i;m}=\ket{B_i^{(1)};{\overline m_1}}\otimes \ket{B_i^{(2)};{\overline m_2}}
;\;\;\;\;\;i=1,...,\ell
\end{eqnarray}
where $\ket{B_i^{(1)};{\overline m_1}}$ is an orthonormal basis in $H_{d_1}$ and
$\ket{B_i^{(2)};{\overline m_2}}$ is an orthonormal basis in $H_{d_2}$.
We need to prove that the set $\{\ket{B_i^{(1)};{\overline m_1}}|i=1,...,\ell\}$, 
is a set of mutually unbiased bases in $H_{d_1}$.
The same applies to the second subsystem.

We consider the following three cases:
\begin{itemize}
\item[(a)]
For $i\ne j$ we consider the case of Eq.(\ref{B1}). 
According to the map of Eq.(\ref{map2}), $n\leftrightarrow (\overline n_1,\overline n_2)$
and $m\leftrightarrow (\overline m_1,\overline m_2)$ and for $n=m\;({\rm mod}\;d_2)$ we get
\begin{eqnarray}\label{qa1}
|\langle B_i^{(1)};{\overline n_1}\ket{B_j^{(1)};{\overline m_1}}|=\frac{1}{d_1^{1/2} 
|\langle B_i^{(2)};{\overline n_2}\ket{B_j^{(2)};{\overline m_2}}|}
\end{eqnarray}

The condition $n=m\;({\rm mod}\;d_2)$ gives $\overline n_2=\overline m_2$.
As $(n,m)$ take all values in ${\mathbb Z}(d)\times {\mathbb Z}(d)$ such that $n=m\;({\rm mod}\;d_2)$, 
the $(\overline n_1,\overline m_1)$ take all values in ${\mathbb Z}(d_1)\times {\mathbb Z}(d_1)$.

From Eq.(\ref{qa1}) follows that
$d_1^{-1/2}\le |\langle B_i^{(1)};{\overline n_1}\ket{B_j^{(1)};{\overline m_1}}|\le 1$.
In addition to that 
\begin{eqnarray}
\sum _{{\overline m_1}\in {\mathbb Z}(d_1)}|\langle B_i^{(1)};{\overline n_1}\ket{B_j^{(1)};{\overline m_1}}|^2=1
\end{eqnarray}
From this we conclude that $|\langle B_i^{(1)};{\overline n_1}\ket{B_j^{(1)};{\overline m_1}}|=d_1^{-1/2}$, for
all $(\overline n_1,\overline m_1)$ in ${\mathbb Z}(d_1)\times {\mathbb Z}(d_1)$.
Therefore the $\ket{B_i^{(1)};{\overline n_1}}$ are mutually unbiased bases in $H_{d_1}$.
Since there are $d_1+1$ mutually unbiased bases in $H_{d_1}$, 
some of the bases $\ket{B_i^{(1)};{\overline n_1}}$ with different index $i$, might be the same basis.

In this case $|\langle B_i^{(2)};{\overline n_2}\ket{B_j^{(2)};{\overline m_2}}|=1$
and therefore
$\ket {B_i^{(2)};{\overline m_2}}$ and $\ket{B_j^{(2)};{\overline n_2}}$ 
are the same basis in $H_{d_2}$.
Here ${\overline n_2}={\overline m_2}$ because $n=m\;({\rm mod}\;d_2)$.

\item[(b)]
The second case corresponding to Eq.(\ref{B2}) is similar to the above case.
\item[(c)]
For $i\ne j$ we consider the case of Eq.(\ref{B3}). 
According to the map of Eq.(\ref{map2}), $n\leftrightarrow (\overline n_1,\overline n_2)$
and $m\leftrightarrow (\overline m_1,\overline m_2)$, and we get
\begin{eqnarray}\label{qa2}
|\langle B_i^{(1)};{\overline n_1}\ket{B_j^{(1)};{\overline m_1}}|
|\langle B_i^{(2)};{\overline n_2}\ket{B_j^{(2)};{\overline m_2}}|=(d_1d_2)^{-1/2}
\end{eqnarray}
We will prove that $|\langle B_i^{(2)};{\overline n_2}\ket{B_j^{(2)};{\overline m_2}}|$
is constant for all $\overline m_2 \in {\mathbb Z}(d_2)$.
We consider the overlap of the vector  $\ket{{\mathfrak B}_j;n}$ in the $i$-basis as in Eq.(\ref{qa2}),
with another vector $\ket{{\mathfrak B}_j;m'}$ in the $j$-basis such that $m=m'\;({\rm mod}\; d_1)$.
Then $\overline m_1=\overline m_1'$ and as
$m'$ takes all values in ${\mathbb Z}(d)$ such that $m=m'\;({\rm mod}\; d_1)$,
the $\overline m_2'$ takes all values in ${\mathbb Z}(d_2)$. 
We get
\begin{eqnarray}\label{qa3}
|\langle B_i^{(1)};{\overline n_1}\ket{B_j^{(1)};{\overline m_1}}|
|\langle B_i^{(2)};{\overline n_2}\ket{B_j^{(2)};{\overline m_2}'}|=(d_1d_2)^{-1/2}
\end{eqnarray}
From Eqs.(\ref{qa2}),(\ref{qa3}) we see that $|\langle B_i^{(2)};{\overline n_2}\ket{B_j^{(2)};{\overline m_2}}|$
is constant for all $\overline m_2 \in {\mathbb Z}(d_2)$. 
This in conjuction with the relation
\begin{eqnarray}\label{qa4}
\sum _{{\overline m_2}\in {\mathbb Z}(d_2)}|\langle B_i^{(2)};{\overline n_2}\ket{B_j^{(2)};{\overline m_2}}|^2=1
\end{eqnarray}
leads to the result that $|\langle B_i^{(2)};{\overline n_2}\ket{B_j^{(2)};{\overline m_2}}|=d_2^{-1/2}$.
In a similar way we prove that  
$|\langle B_i^{(1)};{\overline n_1}\ket{B_j^{(1)};{\overline m_1}}|=d_1^{-1/2}$.
Therefore the $\ket{B_i^{(1)};{\overline m_1}}$ are mutually unbiased bases in $H_{d_1}$ and
$\ket{B_i^{(2)};{\overline m_2}}$ are mutually unbiased bases in $H_{d_2}$.

\end{itemize}
\item[(2)]
Since the number of mutually unbiased bases in $H_{d_1}$ is $d_1+1$ and in $H_{d_2}$ is $d_2+1$, we conclude that
the maximum number of weak mutually unbiased bases is $\psi (d)=(d_1+1)(d_2+1)$. 
It is now clear that there are $\psi(d)[\psi(d)-1]/2$ sets of values $f_{ji}(n,m)$ 
with $i,j=1,..., \psi (d)$ and $i\ne j$ and we now
prove that $d_1\psi(d)/2$ of them belong to the first category of Eq.(\ref{B1}).

We proved earlier that in the case of Eq(\ref{B1}), 
\begin{eqnarray}
\ket{{\mathfrak B}_i;m}=\ket{B_i^{(1)};{\overline m_1}}\otimes \ket{B_i^{(2)};{\overline m_2}}
;\;\;\;\;\;\ket{{\mathfrak B}_j;m}=\ket{B_j^{(1)};{\overline n_1}}\otimes \ket{B_j^{(2)};{\overline n_2}}
\end{eqnarray}
where $\ket{B_i^{(1)};{\overline m_1}}$ and $\ket{B_j^{(1)};{\overline n_1}}$
are mutually unbiased bases in $H_{d_1}$ and 
$\ket{B_i^{(2)};{\overline m_2}}$ and $\ket{B_j^{(2)};{\overline n_2}}$ are the same basis in $H_{d_2}$.
We have $(d_1+1)^2-(d_1+1)=d_1(d_1+1)$ pairs of 
$\ket{B_i^{(1)};{\overline m_1}}$,$\ket{B_j^{(1)};{\overline n_1}}$
bases, with $i\ne j$. We multiply this with the number $d_2+1$ of mutually unbiased bases in $H_{d_2}$
and we prove that there are $d_1(d_1+1)(d_2+1)/2=d_1\psi(d)/2$ sets of values $f_{ji}(n,m)$
which belong to the first category of Eq.(\ref{B1}).
The denominator $2$ is due to the symmetry $f_{ij}(m,n)=f_{ji}(n,m)$.

In a similar way we prove that
$d_2\psi(d)/2$ sets of values $f_{ji}(n,m)$ belong to the second category of Eq.(\ref{B2}).
The total number of sets of values $f_{ji}(n,m)$ is $\psi(d)[\psi(d)-1]/2$ and therefore 
$d\psi(d)/2$ of them belong to the third category of Eq.(\ref{B3}).

\end{itemize}
\end{proof}

\begin{remark}
A pair of bases which satisfy Eq.(\ref{B3}) are mutually unbiased.
But bases which belong to different such pairs 
are not necessarily mutually unbiased. 
\end{remark}

\subsection{An explicit construction of weak mutually unbiased bases}

There exists at least one set of weak mutually unbiased bases.
In \cite{SV}, we have introduced a set of $\psi (d)$ orthonormal bases in $H_d$ which are tensor products of 
$d_1+1$ mutually unbiased bases in $H_{d_1}$  with $d_2+1$ mutually unbiased bases in $H_{d_2}$. 
We have shown that this set of bases obeys the requirements in definition 4.1.
Here we summarize briefly this construction in order to present 
in the next section, the duality between 
lines in ${\mathbb Z}(d)\times {\mathbb Z}(d)$ and weak mutually unbiased bases.
\begin{eqnarray}\label{S}
&&\ket{{\mathfrak B}_1;m}=\ket{X^{(1)};{\overline m}_1}\otimes \ket{X^{(2)};{\overline m}_2}=\ket{X(1,0|0,1);{m}}\nonumber\\
&&\ket{{\mathfrak B}_{2+\lambda _2};m}=\ket{X^{(1)};{\overline m}_1}\otimes \ket{X^{(2)}(0,1|-1,-\lambda _2);{\overline m}_2}
=\ket{X(s_1, t_2s_2|-d_1, s_1-\lambda _2s_2);{m}}\nonumber\\
&&\ket{{\mathfrak B}_{2+d_2+\lambda _1};m}=
\ket{X^{(1)}(0,1|-1,-\lambda _1);{\overline m}_1}\otimes \ket{X^{(2)};{\overline m}_2}=
\ket{X(s_2,t_1s_1|-d_2,-\lambda _1s_1+s_2);{m}}\nonumber\\
&&\ket{{\mathfrak B}_{2+d_1+d_2+\lambda _2+\lambda _1d_2};m}=
\ket{X^{(1)}(0,1|-1,-\lambda _1);{\overline m}_1}\otimes \ket{X^{(2)}(0,1|-1,-\lambda _2);{\overline m}_2}
\nonumber\\&&=
\ket{X(0,\eta, |-d_1-d_2, -\lambda_1s_1-\lambda _2 s_2);{m}}
\end{eqnarray}
Here $0\le \lambda _1\le d_1-1$, $0\le \lambda _2\le d_2-1$, and the
variables $s_1,t_1,s_2,t_2$ have been defined in Eq.(\ref{20}).

\begin{remark}
In \cite{Kantor} it has been shown that there many sets of unitarily inequivalent mutually unbiased bases.
This will lead to many unitarily inequivalent weak mutually unbiased bases. 
Below we show the duality between lines in ${\mathbb Z}(d)\times {\mathbb Z}(d)$ and 
the weak mutually unbiased bases constructed explicitly through symplectic transformations, above.
\end{remark}

\begin{remark}\label{Rema1}
This remark is analogous to remark \ref{Rema} for lines.
We use the notation
\begin{eqnarray}\label{AS1}
\ket{{\mathfrak X}_{-1}^{(1)};{\overline m_1}}&=&\ket{X^{(1)};\overline m_1}\nonumber\\
\ket{{\mathfrak X}_{\lambda _1}^{(1)};{\overline m_1}}&=&\ket{X^{(1)}(0,1|-1,-\lambda _1);
{\overline m}_1};\;\;\;\;\lambda _1=0,...,p_1-1,
\end{eqnarray}
and define a partition of the set of weak mutually unbiased bases , as follows:
\begin{eqnarray}
{\cal T}_\ell&=&\{\ket{{\mathfrak X}_{i_1}^{(1)};{\overline m}_1}\otimes \ket{{\mathfrak X}^{(2)}_{i_1+\ell};
{\overline m}_2}\;|\;
i_1\in {\mathbb Z}_{p_1+1}\} ;\;\;\;\ell\in {\mathbb Z}_{p_2+1}
\end{eqnarray}
We have explained in \cite{SV} that
the bases in the same set ${\cal T}_\ell$ are mutually unbiased.
\end{remark}

\section{Duality between lines in ${\mathbb Z}(d)\times {\mathbb Z}(d)$ and weak mutually unbiased bases}

There exists a correspondence (duality) between the lines in ${\mathbb Z}(d)\times {\mathbb Z}(d)$ and the 
weak mutually unbiased bases.
The `dictionary' for this duality is as follows:
\begin{itemize}
\item
The line ${\cal L}_i$ 
corresponds to the basis $\ket{{\mathfrak B}_i;m}$.
We note that the same parameters are used in the symplectic transformations in Eq.(\ref{AS1}) 
for ${\cal L}_i$, and in the
symplectic transformations in Eq.(\ref{S}) for $\ket{{\mathfrak B}_i;m}$.
\item
The $\psi (d)$ maximal lines through the origin, correspond to the $\psi(d)$ weak mutually unbiased bases.
\item
A pair of maximal lines through the origin which have only the origin as common point, 
correspond to a pair of weak mutually unbiased bases with absolute value of the overlap equal to $d^{-1/2}$ (Eq.(\ref{B3})). In this case the pair of bases is mutually unbiased.
We have seen that there are $d\psi(d)/2$ pairs of maximal lines through the origin 
with this property, and also
$d\psi(d)/2$ pairs of weak mutually unbiased bases with the corresponding property.
\item
A pair of maximal lines through the origin which have $d_1$ points in common, 
correspond to a pair of weak mutually unbiased bases with absolute value of the overlap equal to $d_2^{-1/2}$ (Eq.(\ref{B2}))
We have seen that there are $d_2\psi(d)/2$ pairs of maximal lines through the origin 
with this property, and also
$d_2\psi(d)/2$ pairs of weak mutually unbiased bases with the corresponding property.
\item
A pair of maximal lines through the origin which have $d_2$ points in common, 
correspond to a pair of weak mutually unbiased bases with absolute value of the overlap equal to $d_1^{-1/2}$ 
(Eq.(\ref{B1}))
We have seen that there are $d_1\psi(d)/2$ pairs of maximal lines through the origin 
with this property, and also
$d_1\psi(d)/2$ pairs of weak mutually unbiased bases with the corresponding property.
\item
In Eq.(\ref{red}) we introduced a redundancy parameter that measures the deviation of our finite geometry from the near-linear geometries. 
In Eq.(72) of ref.\cite{SV} we introduced another redundancy parameter for the bases that 
we now call weak mutually unbiased bases, and we
have explained that it measures the `tomographical overcompleteness' of these bases.
These two parameters are equal, which indicates that the concept of weak mutually unbiased bases is
taylored for the geometry of ${\mathbb Z}(d)\times {\mathbb Z}(d)$.
In the case of fields (prime $d$), the geometric redundancy is ${\mathfrak r}=0$ and the weak
mutually unbiased bases, become mutually unbiased bases. 

The concept of mutually unbiased bases, is intimately linked to the number of degrees of freedom in 
the density matrix, without any redundancy.
In tomography experiments, probabilities are measured along lines in ${\mathbb Z}(d)\times {\mathbb Z}(d)$.
The fact that lines in this geometry may intersect at many points, leads to the geometric 
redundancy of Eq.(\ref{red}). This makes necessary a redundancy in the bases linked with such measurements, 
and this leads to the concept of weak mutually unbiased bases. 
\end{itemize}

\subsection{Example for the case $d=15$}

We consider the case $d=15$.
In table I we show the $\psi (15)$ maximal lines ${\cal L}(\rho,\sigma)$ through the origin in ${\mathbb Z}(15)\times {\mathbb Z}(15)$ factorized in terms of
the lines ${{\cal L}^{(1)}}(\rho_1,\overline \sigma _1)$ and ${{\cal L}^{(2)}}(\rho_2,\overline \sigma _2)$, in
${\mathbb Z}(3)\times {\mathbb Z}(3)$ and ${\mathbb Z}(5)\times {\mathbb Z}(5)$, correspondingly.
In table II we show the weak mutually unbiased bases $\ket{B_j;m}$ in $H_{15}$ 
factorized in terms of the mutually unbiased bases 
$\ket{B_j^{(1)};\overline m_1}$ and $\ket{B_j^{(2)};\overline m_2}$ in $H_3$ and $H_5$, correspondingly.

As an example we compare the line ${\cal L}_4={\cal L}(3,7)=g(10,12|13,13){\cal L}_1$ with the basis
$\ket{B_4;m}=\ket{X(10,12|12,13);m}$. 
It is seen that symplectic transformations with the same parameters appear in both cases.

According to the factorization in Eq.(\ref{589}), the line ${\cal L}_4={\cal L}(3,7)$ is factorized as 
${\cal L}_4={{\cal L}^{(1)}}(0,2)\times {{\cal L}^{(2)}}(3,4)$. 
We have explained earlier that ${\cal L}(\nu,\mu)={\cal L}(\nu \lambda,\mu \lambda)$
for any invertible element in ${\mathbb Z}(d)$.
Therefore ${{\cal L}^{(1)}}(\rho_1,\overline \sigma _1)=
{{\cal L}^{(1)}}(\lambda _1\rho_1,\lambda _1\overline \sigma _1)$ for any $\lambda _1\in {\mathbb Z}(3)$
with $\lambda _1\ne 0$, and similarly  ${{\cal L}^{(2)}}(\rho_2,\overline \sigma _2)=
{{\cal L}^{(2)}}(\lambda _2\rho_2,\lambda _2\overline \sigma _2)$ for any $\lambda _2\in {\mathbb Z}(5)$
with $\lambda _2\ne 0$. 
From this follows that ${\cal L}_4={{\cal L}^{(1)}}(0,1)\times {{\cal L}^{(2)}}(1,2)$, which appears in the table.
The basis $\ket{B_4;m}$ is factorized as $\ket{B_4;m}=\ket{X^{(1)};\overline m_1}\times \ket{X^{(2)}(0,1|-1,-2);
\overline m_2}$. Details of this factorization have been given in \cite{SV}.

In table III we present a partition of the set of the maximal lines through the origin in 
${\mathbb Z}(15) \times {\mathbb Z}(15)$ (as discussed in remark \ref{Rema}). 
All the lines in the set ${\cal S}_j$ intersect only in the origin.
Lines in different sets may have many points in common.
In table IV we present a partition of the set of the weak mutually unbiased bases (for the case $d=15$). 
Bases in the same column are mutually unbiased bases. (as discussed in remark \ref{Rema1}). 

\section{Discussion}

The concept of mutually unbiased bases, is related to the number of degrees of freedom in 
the density matrix. It appears to be an ideal concept for optimal tomography, because it has no redundancy.
On the other hand we have explained in this paper that in ${\mathbb Z}(d)\times {\mathbb Z}(d)$
phase space, there is a geometric redundancy in the sense that lines may have many points in common.
The geometry is {\bf not} a near-linear geometry and this leads to the concept of sublines.
Probabilities measured along different lines in tomography experiments, are not independent 
(they should obey the constraints in Eq.(34) in ref.\cite{SV}).

The concept of weak mutually unbiased bases incorporates this redundancy.
There is a duality in the sense that the properties of maximal lines have counterparts in the properties of bases.
For simplicity the work has been presented for the special case that $d$ is the product of two prime numbers.
The generalization to a product of many prime numbers is straightforward. For example 
if $d=d_1d_2d_3$ where $d_1, d_2, d_3$ are prime numbers (different from each other), the definition \ref{def} 
will contain all divisors $d_1, d_2, d_3, d_1d_2, d_2d_3, d_1d_3$ (to the power $-1/2$).
In this case the  weak mutually unbiased bases are tensor products of 
mutually unbiased bases in three subsystems with dimensions $d_1,d_2,d_3$.
In the case that $d$ contains powers of prime numbers, we will use Galois fields for the 
labelling of the states in the component subsystems. 

Quantum tomography is an important technique for state reconstruction in finite quantum systems
\cite{W1,W2,W3,W4,W5,W6,W7,W8,W9,W10,W11}.
The weak mutually unbiased bases can be used in this context.
Other techniques in this general context are symmetric informationally complete 
positive operator valued measures \cite{POVM1,POVM2} and designs\cite{d1,d2,d3,d4}. 

In summary we have introduced the concept of weak mutually unbiased bases, which is motivated by the 
geometrical properties of the ${\mathbb Z}(d)\times {\mathbb Z}(d)$ phase space.

\begin{table}[htbp]
\caption{The maximal lines ${\cal L}(\rho,\sigma)$ through the origin in ${\mathbb Z}(15)\times {\mathbb Z}(15)$
and their factorizations in terms of the lines ${{\cal L}^{(1)}}(\rho_1,\overline \sigma _1)$ in ${\mathbb Z}(3)\times {\mathbb Z}(3)$, and ${{\cal L}^{(2)}}(\rho_2,\overline \sigma _2)$ in 
${\mathbb Z}(5)\times {\mathbb Z}(5)$, according to Eq.(\ref{AS}). 
In the calculations, we take into account that ${{\cal L}^{(1)}}(\rho_1,\overline \sigma _1)=
{{\cal L}^{(1)}}(\lambda _1\rho_1,\lambda _1\overline \sigma _1)$ for any $\lambda _1\in {\mathbb Z}(3)$
with $\lambda _1\ne 0$, and similarly  ${{\cal L}^{(2)}}(\rho_2,\overline \sigma _2)=
{{\cal L}^{(2)}}(\lambda _2\rho_2,\lambda _2\overline \sigma _2)$ for any $\lambda _2\in {\mathbb Z}(5)$
with $\lambda _2\ne 0$.}
\centering
\vspace{12pt}
    \begin{tabular}{|r|c|c|c|}
    \hline
       ${\cal L}(\rho,\sigma)$&$g(\kappa,\lambda|\mu,\nu){\cal L}_1$&${{\cal L}^{(1)}}(\rho_1,\overline \sigma _1)$&${{\cal L}^{(2)}}(\rho_2,\overline \sigma _2)$ \\
      \hline       
      ${\cal L}_1={\cal L}(0,1)$&${\cal L}_1$&${{\cal L}^{(1)}}(0,1)$&${{\cal L}^{(2)}}(0,1)$ \\
            \hline
            
      ${\cal L}_2={\cal L}(6,5)$&$g(10,12|12,10){\cal L}_1$&${{\cal L}^{(1)}}(0,1)$&${{\cal L}^{(2)}}(1,0)=g(0,1|-1,0){{\cal L}^{(2)}}(0,1)$ \\
            \hline
            
      ${\cal L}_3={\cal L}(3,1)$&$g(10,12|12,4){\cal L}_1$&${{\cal L}^{(1)}}(0,1)$&${{\cal L}^{(2)}}(1,4)=g(0,1|-1,-1){{\cal L}^{(2)}}(0,1)$ \\
            \hline
            
      ${\cal L}_4={\cal L}(3,7)$&$g(10,12|12,13){\cal L}_1$&${{\cal L}^{(1)}}(0,1)$&${{\cal L}^{(2)}}(1,3)==g(0,1|-1,-2){{\cal L}^{(2)}}(0,1)$ \\
            \hline
            
      ${\cal L}_5={\cal L}(6,11)$&$g(10,12|12,7){\cal L}_1$&${{\cal L}^{(1)}}(0,1)$&${{\cal L}^{(2)}}(1,2)=g(0,1|-1,-3){{\cal L}^{(2)}}(0,1)$ \\
            \hline
            
      ${\cal L}_6={\cal L}(3,4)$&$g(10,12|12,1){\cal L}_1$&${{\cal L}^{(1)}}(0,1)$&${{\cal L}^{(2)}}(1,1)=g(0,1|-1,-4){{\cal L}^{(2)}}(0,1)$ \\
            \hline
            
      ${\cal L}_7={\cal L}(10,3)$&$g(6,5|10,6){\cal L}_1$&${{\cal L}^{(1)}}(1,0)=g(0,1|-1,0){{\cal L}^{(1)}}(0,1)$&${{\cal L}^{(2)}}(0,1)$ \\
            \hline
            
      ${\cal L}_8={\cal L}(10,13)$&$g(6,5|10,11){\cal L}_1$&${{\cal L}^{(1)}}(1,2)=g(0,1|-1,-1){{\cal L}^{(1)}}(0,1)$&${{\cal L}^{(2)}}(0,1)$ \\
            \hline
            
      ${\cal L}_9={\cal L}(5,4)$&$g(6,5|10,1){\cal L}_1$&${{\cal L}^{(1)}}(1,1)=g(0,1|-1,-2){{\cal L}^{(1)}}(0,1)$&${{\cal L}^{(2)}}(0,1)$ \\
            \hline
            
      ${\cal L}_{10}={\cal L}(1,0)$&$g(0,2|7,0){\cal L}_1$&${{\cal L}^{(1)}}(1,0)=g(0,1|-1,0){{\cal L}^{(1)}}(0,1)$&${{\cal L}^{(2)}}(1,0)=g(0,1|-1,0){{\cal L}^{(2)}}(0,1)$ \\
            \hline
            
      ${\cal L}_{11}={\cal L}(1,12)$&$g(0,2|7,9){\cal L}_1$&${{\cal L}^{(1)}}(1,0)=g(0,1|-1,0){{\cal L}^{(1)}}(0,1)$&${{\cal L}^{(2)}}(1,4)=g(0,1|-1,-1){{\cal L}^{(2)}}(0,1)$ \\
            \hline
            
      ${\cal L}_{12}={\cal L}(1,9)$&$g(0,2|7,3){\cal L}_1$&${{\cal L}^{(1)}}(1,0)=g(0,1|-1,0){{\cal L}^{(1)}}(0,1)$&${{\cal L}^{(2)}}(1,3)=g(0,1|-1,-2){{\cal L}^{(2)}}(0,1)$ \\
            \hline
            
      ${\cal L}_{13}={\cal L}(1,6)$&$g(0,2|7,12){\cal L}_1$&${{\cal L}^{(1)}}(1,0)=g(0,1|-1,0){{\cal L}^{(1)}}(0,1)$&${{\cal L}^{(2)}}(1,2)=g(0,1|-1,-3){{\cal L}^{(2)}}(0,1)$ \\
            \hline
            
      ${\cal L}_{14}={\cal L}(1,3)$&$g(0,2|7,6){\cal L}_1$&${{\cal L}^{(1)}}(1,0)=g(0,1|-1,0){{\cal L}^{(1)}}(0,1)$&${{\cal L}^{(2)}}(1,1)=g(0,1|-1,-4){{\cal L}^{(2)}}(0,1)$ \\
            \hline
            
      ${\cal L}_{15}={\cal L}(1,10)$&$g(0,2|7,5){\cal L}_1$&${{\cal L}^{(1)}}(1,2)=g(0,1|-1,-1){{\cal L}^{(1)}}(0,1)$&${{\cal L}^{(2)}}(1,0)=g(0,1|-1,0){{\cal L}^{(2)}}(0,1)$ \\
            \hline
            
      ${\cal L}_{16}={\cal L}(1,7)$&$g(0,2|7,14){\cal L}_1$&${{\cal L}^{(1)}}(1,2)=g(0,1|-1,-1){{\cal L}^{(1)}}(0,1)$&${{\cal L}^{(2)}}(1,4)=g(0,1|-1,-1){{\cal L}^{(2)}}(0,1)$ \\
            \hline
            
      ${\cal L}_{17}={\cal L}(1,4)$&$g(0,2|7,8){\cal L}_1$&${{\cal L}^{(1)}}(1,2)=g(0,1|-1,-1){{\cal L}^{(1)}}(0,1)$&${{\cal L}^{(2)}}(1,3)=g(0,1|-1,-2){{\cal L}^{(2)}}(0,1)$ \\
            \hline
            
      ${\cal L}_{18}={\cal L}(1,1)$&$g(0,2|7,2){\cal L}_1$&${{\cal L}^{(1)}}(1,2)=g(0,1|-1,-1){{\cal L}^{(1)}}(0,1)$&${{\cal L}^{(2)}}(1,2)=g(0,1|-1,-3){{\cal L}^{(2)}}(0,1)$ \\
            \hline
            
      ${\cal L}_{19}={\cal L}(1,13)$&$g(0,2|7,11){\cal L}_1$&${{\cal L}^{(1)}}(1,2)=g(0,1|-1,-1){{\cal L}^{(1)}}(0,1)$&${{\cal L}^{(2)}}(1,1)=g(0,1|-1,-4){{\cal L}^{(2)}}(0,1)$ \\
            \hline
            
      ${\cal L}_{20}={\cal L}(1,5)$&$g(0,2|7,10){\cal L}_1$&${{\cal L}^{(1)}}(1,1)=g(0,1|-1,-2){{\cal L}^{(1)}}(0,1)$&${{\cal L}^{(2)}}(1,0)=g(0,1|-1,0){{\cal L}^{(2)}}(0,1)$ \\
            \hline
            
      ${\cal L}_{21}={\cal L}(1,2)$&$g(0,2|7,4){\cal L}_1$&${{\cal L}^{(1)}}(1,1)=g(0,1|-1,-2){{\cal L}^{(1)}}(0,1)$&${{\cal L}^{(2)}}(1,4)=g(0,1|-1,-1){{\cal L}^{(2)}}(0,1)$ \\
            \hline
            
      ${\cal L}_{22}={\cal L}(1,14)$&$g(0,2|7,13){\cal L}_1$&${{\cal L}^{(1)}}(1,1)=g(0,1|-1,-2){{\cal L}^{(1)}}(0,1)$&${{\cal L}^{(2)}}(1,3)=g(0,1|-1,-2){{\cal L}^{(2)}}(0,1)$ \\
            \hline
            
      ${\cal L}_{23}={\cal L}(1,11)$&$g(0,2|7,7){\cal L}_1$&${{\cal L}^{(1)}}(1,1)=g(0,1|-1,-2){{\cal L}^{(1)}}(0,1)$&${{\cal L}^{(2)}}(1,2)=g(0,1|-1,-3){{\cal L}^{(2)}}(0,1)$ \\
            \hline
            
      ${\cal L}_{24}={\cal L}(1,8)$&$g(0,2|7,1){\cal L}_1$&${{\cal L}^{(1)}}(1,1)=g(0,1|-1,-2){{\cal L}^{(1)}}(0,1)$&${{\cal L}^{(2)}}(1,1)=g(0,1|-1,-4){{\cal L}^{(2)}}(0,1)$ \\
            \hline

    \end{tabular}
\end{table}
\begin{table}
\caption{The weak mutually unbiased bases in $H_{15}$ and their factorizations in terms of the
mutually unbiased bases $\ket{B_j^{(1)};\overline m_1}$ in $H_{3}$, and  $\ket{B_j^{(2)};\overline m_2}$ in $H_5$, according to Eq.(\ref{S}).}
\centering
\vspace{12pt}
    \begin{tabular}{|r|c|c|c|}
    \hline
       $\ket{{\mathfrak B}_j;m}$&$\ket{X(\kappa,\lambda|\mu,\nu);m}$&$\ket{B_j^{(1)};\overline m_1}$&$\ket{B_j^{(2)};\overline m_2}$ \\
      \hline       
      $\ket{{\mathfrak B}_1;m}$&$\ket{X;m}$&$\ket{X^{(1)};\overline m_1}$&$\ket{X^{(2)};\overline m_2}$ \\
            \hline
            
      $\ket{{\mathfrak B}_2;m}$&$\ket{X(10,12|12,10);m}$&$\ket{X^{(1)};\overline m_1}$&$\ket{X^{(2)}(0,1|-1,0);\overline m_2}$ \\
            \hline
            
      $\ket{{\mathfrak B}_3;m}$&$\ket{X(10,12|12,4);m}$&$\ket{X^{(1)};\overline m_1}$&$\ket{X^{(2)}(0,1|-1,-1);\overline m_2}$ \\
            \hline
            
      $\ket{{\mathfrak B}_4;m}$&$\ket{X(10,12|12,13);m}$&$\ket{X^{(1)};\overline m_1}$&$\ket{X^{(2)}(0,1|-1,-2);\overline m_2}$ \\
            \hline
            
      $\ket{{\mathfrak B}_5;m}$&$\ket{X(10,12|12,7);m}$&$\ket{X^{(1)};\overline m_1}$&$\ket{X^{(2)}(0,1|-1,-3);\overline m_2}$ \\
            \hline
            
      $\ket{{\mathfrak B}_6;m}$&$\ket{X(10,12|12,1);m}$&$\ket{X^{(1)};\overline m_1}$&$\ket{X^{(2)}(0,1|-1,-4);\overline m_2}$ \\
            \hline
            
      $\ket{{\mathfrak B}_7;m}$&$\ket{X(6,5|10,6);m}$&$\ket{X^{(1)}(0,1|-1,0);\overline m_1}$&$\ket{X^{(2)};\overline m_2}$ \\
            \hline
            
      $\ket{{\mathfrak B}_8;m}$&$\ket{X(6,5|10,11);m}$&$\ket{X^{(1)}(0,1|-1,-1);\overline m_1}$&$\ket{X^{(2)};\overline m_2}$ \\
            \hline
            
      $\ket{{\mathfrak B}_9;m}$&$\ket{X(6,5|10,1);m}$&$\ket{X^{(1)}(0,1|-1,-2);\overline m_1}$&$\ket{X^{(2)};\overline m_2}$ \\
            \hline
            
      $\ket{{\mathfrak B}_{10};m}$&$\ket{X(0,2|7,0);m}$&$\ket{X^{(1)}(0,1|-1,0);\overline m_1}$&$\ket{X^{(2)}(0,1|-1,0);\overline m_2}$ \\
            \hline
            
      $\ket{{\mathfrak B}_{11};m}$&$\ket{X(0,2|7,9);m}$&$\ket{X^{(1)}(0,1|-1,0);\overline m_2}$&$\ket{X^{(2)}(0,1|-1,-1);\overline m_2}$ \\
            \hline
            
      $\ket{{\mathfrak B}_{12};m}$&$\ket{X(0,2|7,3);m}$&$\ket{X^{(1)}(0,1|-1,0);\overline m_1}$&$\ket{X^{(2)}(0,1|-1,-2);\overline m_2}$ \\
            \hline
            
      $\ket{{\mathfrak B}_{13};m}$&$\ket{X(0,2|7,12);m}$&$\ket{X^{(1)}(0,1|-1,0);\overline m_1}$&$\ket{X^{(2)}(0,1|-1,-3);\overline m_2}$ \\
            \hline
            
      $\ket{{\mathfrak B}_{14};m}$&$\ket{X(0,2|7,6);m}$&$\ket{X^{(1)}(0,1|-1,0);\overline m_1}$&$\ket{X^{(2)}(0,1|-1,-4);\overline m_2}$ \\
            \hline
            
      $\ket{{\mathfrak B}_{15};m}$&$\ket{X(0,2|7,5);m}$&$\ket{X^{(1)}(0,1|-1,-1);\overline m_1}$&$\ket{X^{(2)}(0,1|-1,0);\overline m_2}$ \\
            \hline
            
      $\ket{{\mathfrak B}_{16};m}$&$\ket{X(0,2|7,14);m}$&$\ket{X^{(1)}(0,1|-1,-1);\overline m_1}$&$\ket{X^{(2)}(0,1|-1,-1);\overline m_2}$ \\
            \hline
            
      $\ket{{\mathfrak B}_{17};m}$&$\ket{X(0,2|7,8);m}$&$\ket{X^{(1)}(0,1|-1,-1);\overline m_1}$&$\ket{X^{(2)}(0,1|-1,-2);\overline m_2}$ \\
            \hline
            
      $\ket{{\mathfrak B}_{18};m}$&$\ket{X(0,2|7,2);m}$&$\ket{X^{(1)}(0,1|-1,-1);\overline m_1}$&$\ket{X^{(2)}(0,1|-1,-3);\overline m_2}$ \\
            \hline
            
      $\ket{{\mathfrak B}_{19};m}$&$\ket{X(0,2|7,11);m}$&$\ket{X^{(1)}(0,1|-1,-1);\overline m_1}$&$\ket{X^{(2)}(0,1|-1,-4);\overline m_2}$ \\
            \hline
            
      $\ket{{\mathfrak B}_{20};m}$&$\ket{X(0,2|7,10);m}$&$\ket{X^{(1)}(0,1|-1,-2);\overline m_1}$&$\ket{X^{(2)}(0,1|-1,0);\overline m_2}$ \\
            \hline
            
      $\ket{{\mathfrak B}_{21};m}$&$\ket{X(0,2|7,4);m}$&$\ket{X^{(1)}(0,1|-1,-2);\overline m_1}$&$\ket{X^{(2)}(0,1|-1,-1);\overline m_2}$ \\
            \hline
            
      $\ket{{\mathfrak B}_{22};m}$&$\ket{X(0,2|7,13);m}$&$\ket{X^{(1)}(0,1|-1,-2);\overline m_1}$&$\ket{X^{(2)}(0,1|-1,-2);\overline m_2}$ \\
            \hline
            
      $\ket{{\mathfrak B}_{23};m}$&$\ket{X(0,2|7,7);m}$&$\ket{X^{(1)}(0,1|-1,-2);\overline m_1}$&$\ket{X^{(2)}(0,1|-1,-3);\overline m_2}$ \\
            \hline
            
      $\ket{{\mathfrak B}_{24};m}$&$\ket{X(0,2|7,1);m}$&$\ket{X^{(1)}(0,1|-1,-2);\overline m_1}$&$\ket{X^{(2)}(0,1|-1,-4);\overline m_2}$ \\
            \hline

    \end{tabular}
\end{table}

\begin{table}[htbp]
\caption{A partition of the set of the maximal lines through the origin in ${\mathbb Z}(15) \times {\mathbb Z}(15)$. All the lines in the set ${\cal S}_j$ (i.e., in the same column) intersect only at the origin.}
\centering
\vspace{12pt}
    \begin{tabular}{|r|c|c|c|c|c|}
    \hline
       ${\cal S}_0$&${\cal S}_1$&${\cal S}_2$&${\cal S}_3$&${\cal S}_4$&${\cal S}_5$ \\
    \hline
       ${\cal L}_1$&${\cal L}_2$&${\cal L}_3$&${\cal L}_4$&${\cal L}_5$&${\cal L}_6$ \\
    \hline
       ${\cal L}_{10}$&${\cal L}_{11}$&${\cal L}_{12}$&${\cal L}_9$&${\cal L}_8$&${\cal L}_7$ \\
    \hline
       ${\cal L}_{16}$&${\cal L}_{17}$&${\cal L}_{18}$&${\cal L}_{13}$&${\cal L}_{14}$&${\cal L}_{15}$ \\
    \hline
       ${\cal L}_{22}$&${\cal L}_{23}$&${\cal L}_{24}$&${\cal L}_{19}$&${\cal L}_{20}$&${\cal L}_{21}$ \\
    \hline
    
    \end{tabular}
\end{table}

\begin{table}[htbp]
\caption{A partition of the set of the weak mutually unbiased bases (for the case $d=15$). 
All the bases in the set ${\cal T}_j$ (i.e., in the same column) are mutually unbiased bases.}
\centering
\vspace{12pt}
    \begin{tabular}{|r|c|c|c|c|c|}
    \hline
       ${\cal T}_0$&${\cal T}_1$&${\cal T}_2$&${\cal T}_3$&${\cal T}_4$&${\cal T}_5$ \\
    \hline
       $\ket{{\mathfrak B}_1;m}$&$\ket{{\mathfrak B}_2;m}$&$\ket{{\mathfrak B}_3;m}$&$\ket{{\mathfrak B}_4;m}$&$\ket{{\mathfrak B}_5;m}$&$\ket{{\mathfrak B}_6;m}$ \\
    \hline
       $\ket{{\mathfrak B}_{10};m}$&$\ket{{\mathfrak B}_{11};m}$&$\ket{{\mathfrak B}_{12};m}$&$\ket{{\mathfrak B}_9;m}$&$\ket{{\mathfrak B}_8;m}$&$\ket{{\mathfrak B}_7;m}$ \\
    \hline
       $\ket{{\mathfrak B}_{16};m}$&$\ket{{\mathfrak B}_{17};m}$&$\ket{{\mathfrak B}_{18};m}$&$\ket{{\mathfrak B}_{13};m}$&$\ket{{\mathfrak B}_{14};m}$&$\ket{{\mathfrak B}_{15};m}$ \\
    \hline
       $\ket{{\mathfrak B}_{22};m}$&$\ket{{\mathfrak B}_{23};m}$&$\ket{{\mathfrak B}_{24};m}$&$\ket{{\mathfrak B}_{19};m}$&$\ket{{\mathfrak B}_{20};m}$&$\ket{{\mathfrak B}_{21};m}$ \\
    \hline
    
    \end{tabular}
\end{table}

\enddocument